\documentclass[conference,11pt,onecolumn]{IEEEtran}
\IEEEoverridecommandlockouts

\usepackage[utf8]{inputenc} 
\usepackage[T1]{fontenc}
\usepackage{url}
\usepackage{ifthen}
\usepackage{cite}
\usepackage[cmex10]{amsmath} 
\usepackage{amssymb}
\usepackage{amsthm}
\usepackage{cuted}
\usepackage{subfigure}
\usepackage{booktabs}
\usepackage{hhline}
\usepackage{array}
\usepackage{graphicx}
\usepackage{textcomp}
\usepackage{xcolor}
\usepackage{colortbl}

\newtheorem{lem}{Lemma}

\newtheorem{theorem}{Theorem}

\newtheorem{remark}{Remark}

\usepackage{bbm}
\usepackage{amsmath}
\usepackage{amssymb}
\usepackage{amsfonts}
\usepackage{epsfig}
\usepackage{calc}
\usepackage{color}
\usepackage[all]{xy}
\usepackage{bm}
\usepackage{enumerate}
\usepackage{pdfsync}
\usepackage{framed}
\usepackage{blkarray}
\usepackage{subcaption}
\usepackage{enumitem}
\usepackage{hyperref}
\usepackage[linesnumbered,ruled,vlined]{algorithm2e}
\usepackage{comment}
\usepackage{url}
\usepackage{multirow}

\usepackage{caption} 
\renewcommand{\arraystretch}{1.8}
\def\BibTeX{{\rm B\kern-.05em{\sc i\kern-.025em b}\kern-.08em
    T\kern-.1667em\lower.7ex\hbox{E}\kern-.125emX}}
\begin{document}

\title{Deterministic Distributed Linearly Separable Computation via Nested Cyclic Codes}
\author{
  \IEEEauthorblockN{Anu Mathew}
  \IEEEauthorblockA{IIT Palakkad, India  \\ anumundackal@gmail.com}
  
  \and 
  \IEEEauthorblockN{M. Nikhil Krishnan}
  \IEEEauthorblockA{IIT Palakkad, India \\
  nikhilkrishnan.m@gmail.com}
  \thanks{The work of M. Nikhil Krishnan is supported by the DST-INSPIRE Faculty Fellowship from the Department of Science and Technology (DST), and the ARG MATRICS grant from the Anusandhan National Research Foundation (ANRF).}}

\maketitle

\begin{abstract}
We study the $(K,N,S,M,K_c)$ distributed linearly separable computation (DLSC) setting, where a master employs $N$ workers with a per-worker computation load of $M$ to recover $K_c$ deterministic linear combinations of $K$ messages in the presence of up to $S$ stragglers. We propose a DLSC framework based on nested cyclic codes that achieves $M=\frac{K}{N}(S+K_c)$. For sufficiently large $K$, we establish the matching converse
$M\geq\frac{K}{N}(S+K_c)$, proving the asymptotic
computation load optimality of the proposed framework. By instantiating cyclic locally recoverable codes (LRCs) as base codes in the framework, the master can recover the desired $K_c$ linear combinations by contacting fewer than $N-S$ workers under many straggler scenarios. This essentially softens the straggler
threshold without increasing the computation load or communication
cost. Finally, we extend the construction to reduce the per-worker communication load by a factor of $m$, while achieving an asymptotically optimal computation load of $M=\frac{K}{N}(S+mK_c)$.
\end{abstract}

\section{Introduction}

Distributed computing enables large-scale computations across multiple
workers, but its performance can be affected by slow or failed workers,
known as \emph{stragglers}~\cite{Dean2013}. Coding-theoretic techniques
have been developed to provide straggler resilience through structured
redundancy~\cite{kiani2018exploitation,park2018hierarchical,
Li2016Unified,Li2018Fundamental}. Among these approaches, gradient coding (GC), introduced in~\cite{Tandon2017}, has been widely studied for straggler-resilient
distributed gradient computation. GC can be viewed as a special case
of DLSC with $K_c=1$, where the master seeks to recover the sum of the
partial gradients from any $N-S$ workers with a per-worker computation
load of $\frac{K}{N}(S+1)$.
Several alternative GC constructions have subsequently been developed,
including cyclic MDS-based schemes~\cite{Raviv2020},
Reed--Solomon-based constructions~\cite{Halbawi2018RS},
clustered gradient coding~\cite{Buyukates2023}, and schemes
addressing computation--communication
tradeoffs~\cite{Ye2018,Kadhe2020CommGC}.

Beyond gradient computation, coding-based approaches have also been extensively studied for distributed matrix--vector and matrix--matrix multiplication and
polynomial computations~\cite{dutta2016shortdot,Suh2017Matrix,
ramamoorthy2019universally,Das2019Convolutional,Ramamoorthy2020,
Lee2018,yu2017polynomial,Dutta2020Optimal,Yu2020,
NikhilKrishnan2021}. In a related direction, the DLSC problem was formulated
in~\cite{Wan2021} for recovering $K_c$ linear combinations of
$K$ messages from any $N-S$ workers.
The work \cite{Wan2021} considers randomly chosen demand coefficients with
high-probability recovery and the minimum computation load
$M=\frac{K}{N}(S+1)$, while characterizing the corresponding minimum
communication cost. 

In many practical applications, the desired linear combinations are task-dependent rather than randomly chosen. Consider a computing cluster evaluating multiple tasks over distributed data, such as hyperparameter tuning (requiring a distinct aggregate gradient for each hyperparameter), or large-scale matrix-vector and matrix-matrix multiplications involving specific matrices (e.g., the DFT matrix). Such applications motivate the deterministic recovery of arbitrary prescribed linear combinations under a fixed data assignment.

\subsection*{Our Contributions}
\begin{itemize}
\item We develop a DLSC framework using a \emph{nested family of cyclic codes}. This framework guarantees the \emph{deterministic recovery} of any $K_c$ prescribed linear combinations of $K$ messages from the responses of any $N-S$ workers, achieving a per-worker computation load of $M=\frac{K}{N}(S+K_c)$.

\item We establish a matching converse for sufficiently
large $K$, proving the asymptotic computation-load optimality
of our proposed scheme.

\item We instantiate cyclic LRCs to provide
a \emph{soft straggler threshold}, enabling recovery
beyond $S$ stragglers under certain straggler patterns
without compromising the computation or communication load.

\item We extend the construction to provide a flexible computation--communication tradeoff. By increasing the computation load to $M=\frac{K}{N}(S+mK_c)$, the per-worker communication load is reduced by a factor of $m$. We show this tradeoff remains asymptotically optimal.
\end{itemize}
\subsection{Notation}
For integers $a\leq b$, define
$[a:b]\triangleq\{a,a+1,\ldots,b\}$. Uppercase bold letters (e.g., $\mathbf{A}$) denote matrices, uppercase
calligraphic letters (e.g., $\mathcal{C}$ and $\mathcal{S}$) denote
codes and sets, and lowercase bold letters (e.g., $\mathbf{a}$) denote vectors. For a matrix $\mathbf A\in\mathbb F^{m\times n}$, we use zero-based
indexing, with row indices $i\in[0:m-1]$ and column indices
$j\in[0:n-1]$. The notations $\mathbf A(i,:)$,
$\mathbf A(:,j)$, and $a_{i,j}$ denote the row indexed by $i$, the
column indexed by $j$, and the $(i,j)$-th entry of $\mathbf A$,
respectively. For an index set $\mathcal S$, $\mathbf A(\mathcal S,:)$
denotes the submatrix of $\mathbf A$ formed by
the rows indexed by $\mathcal S$. For index sets $\mathcal R$ and $\mathcal J$,
$\mathbf A(\mathcal R,\mathcal J)$ denotes the
submatrix formed by the rows indexed by $\mathcal R$
and columns indexed by $\mathcal J$. The row and column spaces of $\mathbf A$ are
denoted by $\operatorname{Row}(\mathbf A)$
and $\operatorname{Col}(\mathbf A)$, respectively. We use $\mathbb{F}$ to denote a field and $\mathbb{F}_q$ to denote the
finite field with $q$ elements. The $m\times m$ identity matrix is
denoted by $\mathbf{I}_m$. For any vector $\mathbf{a}\in\mathbb{F}^N$,
its support and Hamming weight are defined as
$\operatorname{supp}(\mathbf{a})
=\{i\in[0:N-1]\mid a_i\neq0\}$ and
$\operatorname{wt}(\mathbf{a})
=|\operatorname{supp}(\mathbf{a})|$,
respectively.

\subsection{System Model and Problem Formulation}
\label{sys-model}
\begingroup
\setlength{\abovedisplayskip}{4pt}
\setlength{\belowdisplayskip}{4pt}
\setlength{\abovedisplayshortskip}{3pt}
\setlength{\belowdisplayshortskip}{3pt}

We consider a $(K,N,S,M,K_c)$ distributed linearly separable
computation (DLSC) system with a master and $N$ workers
$W_0,\ldots,W_{N-1}$. A dataset is partitioned into $K$ data fragments
$D_0,\ldots,D_{K-1}$, where $K$ is assumed to be a multiple of $N$.
For each data fragment $D_k$, let the column vector $P_k\triangleq f_k(D_k)\in\mathbb{F}^{L\times 1}, k\in[0:K-1],$
denote the corresponding message of $L$ symbols. The messages are assumed to be mutually independent.

The $K$ data fragments are assigned to the $N$ workers in an uncoded manner. Let $\mathcal{S}_i\subseteq[0:K-1]$ denote the indices of the data fragments assigned to worker $W_i$, such that $D_k$ is available at worker $W_i$ if and only if $k\in\mathcal{S}_i$. We consider a uniform data assignment with $|\mathcal{S}_i|=M$ for all $i\in[0:N-1]$, where $M$ denotes the per-worker computation load. Consistent with the literature, we adopt a cyclic data assignment in which a base assignment pattern is cyclically shifted across the workers and repeated over successive blocks of $N$ data fragments. For $K=\mu N$, where $\mu$ is a positive integer and $i\in[0:N-1]$, the index set of the data fragments assigned to worker $W_i$ is given by
\begin{equation*}
\mathcal{S}_i=
\bigcup_{t=0}^{\mu-1}
\left\{
tN+((k-i)\bmod N) \mid k\in\mathcal{S}_0^{(0)}
\right\},
\end{equation*}
where $\mathcal{S}_0^{(0)}\subseteq[0:N-1]$ denotes the indices of the data fragments assigned to worker $W_0$ in the first block, with $|\mathcal{S}_0^{(0)}|=M/\mu$. Owing to this structure, we refer to the proposed framework as \emph{cyclic DLSC}.

Each worker $W_i$ then computes
$P_k=f_k(D_k)$ for every $k\in\mathcal{S}_i$.
The master wishes to recover $K_c$ prescribed linear combinations of
the $K$ messages. Let $T$ denote the desired linear mapping specified
by a full-row-rank demand matrix
$\mathbf{A}\in\mathbb{F}^{K_c\times K}$. i.e., $T(P_0,\ldots,P_{K-1})
=
\mathbf{A}
\begingroup
\setlength{\arraycolsep}{2pt}
\renewcommand{\arraystretch}{0.65}
\begin{bmatrix}
P_0 & \cdots & P_{K-1}
\end{bmatrix}^{T} \endgroup$, where $\operatorname{rank}(\mathbf{A})=K_c$.

Note that the data assignment depends only on the number of desired linear combinations $K_c$ and can be performed
without prior knowledge of the demand matrix $\mathbf A$.
Thus, the same data placement supports different sets of
$K_c$ linear combinations without requiring data
redistribution among the workers.

Let $\mathbf{\Phi}=[\phi_{i,k}]\in\mathbb{F}^{N\times K}$ denote the
encoding matrix corresponding to the prescribed demand matrix
$\mathbf A$, with
$\phi_{i,k}=0$ for all $k\notin\mathcal S_i$. Worker $W_i$ transmits the coded message
\begin{equation*}
Y_i
=
\sum_{k\in\mathcal{S}_i}\phi_{i,k}P_k,
\qquad i\in[0:N-1].
\end{equation*}
The master must recover the desired $K_c$ linear combinations from
the responses of any $N-S$ workers. Hence, for every response set $\mathcal{R}\subseteq[0:N-1]$ with $|\mathcal{R}|=N-S$, the corresponding $(N-S)\times K$
submatrix $\mathbf{\Phi}_{\mathcal{R}}$ must satisfy $\operatorname{Row}(\mathbf{A})
\subseteq
\operatorname{Row}(\mathbf{\Phi}_{\mathcal{R}}).$
Note that this necessitates $K_c\leq N-S.$
\endgroup

\section{Proposed Cyclic DLSC Framework}
\label{sec:cyclic-dlsc-framework}
\begingroup
\setlength{\abovedisplayskip}{4pt}
\setlength{\belowdisplayskip}{4pt}
\setlength{\abovedisplayshortskip}{3pt}
\setlength{\belowdisplayshortskip}{3pt}
In this section, we present a cyclic $(K,N,S,M,K_c)$ DLSC scheme based on a nested
family of cyclic codes. The support structure of the encoding matrix $\mathbf{\Phi}$ is defined by minimum-weight codewords selected from these codes, while the exact encoding coefficients are determined by the prescribed demand matrix $\mathbf A$. The resulting scheme achieves a per-worker computation load $M=\frac{K}{N}(S+K_c)$.

For ease of presentation, we first consider $K=N$ and subsequently
extend the construction to $K$ being a positive integer multiple of
$N$. The proposed framework can be realized using suitable families
of nested cyclic codes, including Reed--Solomon (RS) codes. Moreover, choosing LRCs enables the framework to exploit local repair
and recover from more than $S$ stragglers under favorable patterns.
Thus, the LRC-based construction relaxes the hard recovery threshold of $N-S$. We first present the general construction, followed by specific realizations using RS codes and LRCs.

\subsection{General Construction Based on Nested Cyclic Codes}
\label{subsubsec:general-nested-construction}

Consider a family of $K_c$ nested length-$N$ cyclic codes $\mathcal{C}_1\subsetneq\mathcal{C}_2\subsetneq\cdots
\subsetneq\mathcal{C}_{K_c},$
where $\mathcal{C}_{K_c}$ is the supercode. Let $k_i$ and $d_i$
denote the dimension and minimum distance of $\mathcal{C}_i$,
respectively. For each $\mathcal C_i$, select a minimum-distance
codeword $\begingroup
\setlength{\arraycolsep}{2pt}
\renewcommand{\arraystretch}{0.82}
\mathbf{c}_i
=
\begin{bmatrix}
c_{i,0} & c_{i,1} & \cdots & c_{i,N-1}
\end{bmatrix}, i \in [1:K_c].
\endgroup$
The nested codes and the selected minimum-distance codewords are
required to satisfy
\begin{equation}
\begin{aligned}
d_i &= S+K_c-i+1,\\
k_i &= k_1+i-1, \qquad i \in [1:K_c],\\
\operatorname{supp}(\mathbf c_{K_c})
&\subsetneq \operatorname{supp}(\mathbf c_{K_c-1})
\subsetneq \cdots
\subsetneq \operatorname{supp}(\mathbf c_1),
\end{aligned}
\label{eq:nested-constraints}
\end{equation}
with $\operatorname{wt}(\mathbf c_i)=d_i$.
Consequently, the supercode $\mathcal C_{K_c}$ has minimum distance
$d_{K_c}=S+1$, while
$\operatorname{wt}(\mathbf c_1)=d_1=S+K_c$.
The existence of minimum-distance codewords satisfying
\eqref{eq:nested-constraints} depends on the underlying code
family and is established for the RS and LRC constructions
presented subsequently.

Define the block matrix 
\[
\begingroup
\setlength{\arraycolsep}{2pt}
\renewcommand{\arraystretch}{0.75}
\mathbf{\Delta}\triangleq
\left[
\mathbf{\Delta}_0\;
\mathbf{\Delta}_1\;
\cdots\;
\mathbf{\Delta}_{N-1}
\right]
\in\mathbb{F}^{N\times NK_c},
\endgroup
\]
where the first block is formed from the selected minimum-weight
codewords as $\begingroup
\setlength{\arraycolsep}{2pt}
\renewcommand{\arraystretch}{0.75}
\mathbf{\Delta}_0
=
\begin{bmatrix}
\mathbf{c}_1^T & \mathbf{c}_2^T & \cdots & \mathbf{c}_{K_c}^T
\end{bmatrix}.
\endgroup$
The remaining blocks
$\mathbf{\Delta}_1,\ldots,\mathbf{\Delta}_{N-1}$ are obtained by
successive cyclic shifts of the columns of $\mathbf{\Delta}_0$.
i.e., for $j \in [0:N-1]$,
\[
\begingroup
\setlength{\arraycolsep}{2pt}
\renewcommand{\arraystretch}{0.65}
\mathbf{\Delta}_j\triangleq
\begin{bmatrix}
c_{1,j} & c_{2,j} & \cdots & c_{K_c,j}\\
c_{1,j+1} & c_{2,j+1} & \cdots & c_{K_c,j+1}\\
\vdots & \vdots & \ddots & \vdots\\
c_{1,j+N-1} & c_{2,j+N-1} & \cdots & c_{K_c,j+N-1}
\end{bmatrix},
\endgroup
\]
with all indices taken modulo $N$. 
The block $\mathbf{\Delta}_j$ corresponds to data fragment $D_j$, and
the data assignment is specified by $j\in\mathcal{S}_i
\Longleftrightarrow
\mathbf{\Delta}_j(i,:)\neq\mathbf{0}, i,j \in [0:N-1].$

By the nested-support property, the nonzero row support of each block
$\mathbf{\Delta}_j$ is determined by the corresponding cyclic shift of
$\operatorname{supp}(\mathbf{c}_1)$. Since
$\operatorname{wt}(\mathbf{c}_1)=S+K_c$, each block has exactly
$S+K_c$ nonzero rows. Moreover, since the blocks are successive cyclic
shifts, each worker is assigned exactly $S+K_c$ data fragments.
Therefore, for $K=N$, the per-worker computation load is $M=S+K_c.$

Let $\mathbf{G}\in\mathbb{F}^{k_{K_c}\times N}$ be a generator matrix for the supercode $\mathcal{C}_{K_c}$, with rows $\mathbf{g}_e = [1\ \alpha^{s_e}\ \cdots\ \alpha^{(N-1)s_e}]$ for $e \in [0:k_{K_c}-1]$, where $\alpha$ is a primitive $N$-th root of unity. The distinct exponents $s_e$ are ordered such that the top $k_i$ rows of $\mathbf{G}$ form a generator matrix for $\mathcal{C}_i$, for each $i\in[1:K_c]$. Since all selected codewords belong to $\mathcal{C}_{K_c}$, we can factorize $\mathbf{\Delta}$ as
\[
\begingroup
\setlength{\arraycolsep}{2pt}
\renewcommand{\arraystretch}{0.75}
\mathbf{\Delta}
=
\mathbf{G}^{T}\mathbf{B}',
\qquad
\mathbf{B}'
=
\left[
\mathbf{B}'_0\;
\mathbf{B}'_1\;
\cdots\;
\mathbf{B}'_{N-1}
\right].
\endgroup
\]
where
$\mathbf{B}'\in\mathbb{F}^{k_{K_c}\times NK_c}$ and each
$\mathbf{B}_j'\in\mathbb{F}^{k_{K_c}\times K_c}$ contains the message
vectors corresponding to the codewords in $\mathbf{\Delta}_j$.
For a selected minimum-distance codeword $\mathbf{c}_1 \in \mathcal{C}_1$, the top $k_1$ rows of $\mathbf{G}$ can always be ordered such that the last message coefficient (associated with row $k_1-1$) is non-zero. This arrangement guarantees that the bottom $K_c \times K_c$ submatrix of $\mathbf{B}'_j$ is nonsingular for every $j\in[0:N-1]$ (see Appendix~\ref{app:Bj-nonsingular}).

Let $\mathbf A=[a_{i,j}]\in\mathbb F^{K_c\times N}$ be the prescribed
full-row-rank demand matrix, and let $\mathbf A(:,j)$ denote the $j$-th column. Since the bottom $K_c\times K_c$ submatrix of
$\mathbf{B}_j'$ is nonsingular, there exists a unique vector
$\mathbf{v}_j$ in its column space whose last $K_c$ entries equal
$\mathbf A(:,j)$, i.e.,
\begin{equation*}
\begingroup
\setlength{\arraycolsep}{3pt}
\renewcommand{\arraystretch}{0.82}
\mathbf{v}_j=
\begin{bmatrix}
\star & \cdots & \star &
a_{0,j} &
a_{1,j} &
\cdots &
a_{K_c-1,j}
\end{bmatrix}^{T},
j \in [0:N-1],
\endgroup
\end{equation*}
where ``$\star$'' denotes entries determined by the corresponding
linear combination of the columns of $\mathbf{B}_j'$.

We define $\begingroup
\setlength{\arraycolsep}{3pt}
\renewcommand{\arraystretch}{0.62}
\mathbf{B}
\triangleq
\begin{bmatrix}
\mathbf{v}_0 &
\mathbf{v}_1 &
\cdots &
\mathbf{v}_{N-1}
\end{bmatrix}.
\endgroup$
By construction, the last $K_c$ rows of $\mathbf B$ coincide with the
demand matrix $\mathbf A$. The encoding matrix is then defined as $\begingroup
\setlength{\arraycolsep}{3pt}
\renewcommand{\arraystretch}{0.82}
\mathbf{\Phi}
\triangleq
\mathbf{G}^{T}\mathbf{B}.
\endgroup$

Since $\mathbf v_j\in\operatorname{Col}(\mathbf B_j')$,
$\mathbf G^T\mathbf v_j$ is a linear combination
of the columns of $\mathbf\Delta_j$.
Therefore, every zero row of $\mathbf\Delta_j$
corresponds to a zero entry in $\mathbf G^T\mathbf v_j$.
Hence, $\operatorname{supp}\!\left(\mathbf\Phi(:,j)\right)
\subseteq
\{i\in[0:N-1]:\mathbf\Delta_j(i,:)\neq\mathbf 0\},$
and the encoding matrix $\mathbf\Phi$ is consistent with respect to the data assignment $\{\mathcal{S}_i\}$.

The coded computations transmitted by the $N$ workers can be
collectively expressed as 
$$\begingroup
\setlength{\arraycolsep}{3pt}
\renewcommand{\arraystretch}{0.82}
\mathbf{Y}
=
\mathbf{G}^{T}\mathbf{B}
\begin{bmatrix}
P_0 & P_1 & \cdots & P_{N-1}
\end{bmatrix}^{T}.
\endgroup$$
Here, the row indexed by $i$ of $\mathbf Y$ represents the coded computation
transmitted by worker $W_i$, $i\in[0:N-1]$.

Let $\mathcal R\subseteq[0:N-1]$, with $|\mathcal R|=N-S$, denote an
arbitrary set of nonstraggling workers. The master receives $\begingroup
\setlength{\arraycolsep}{2pt}
\renewcommand{\arraystretch}{0.75}
\mathbf Y_{\mathcal R}
=
\mathbf G^T(\mathcal R,:)\mathbf B
\begin{bmatrix}
P_0 & P_1 & \cdots & P_{N-1}
\end{bmatrix}^{T}.
\endgroup$
Each column of $\begingroup
\setlength{\arraycolsep}{2pt}
\renewcommand{\arraystretch}{0.75}
\mathbf Y
=
\mathbf G^T\mathbf B
\begin{bmatrix}
P_0 & P_1 & \cdots & P_{N-1}
\end{bmatrix}^{T}
\endgroup$
is a codeword of $\mathcal C_{K_c}$. Since
$d_{K_c}=S+1$, erasure decoding from the $N-S$ available coordinates
recovers the corresponding message vectors, yielding $\begingroup
\setlength{\arraycolsep}{2pt}
\renewcommand{\arraystretch}{0.75}
\mathbf B
\begin{bmatrix}
P_0 & P_1 & \cdots & P_{N-1}
\end{bmatrix}^{T}.
\endgroup$
Since the last $K_c$ rows of $\mathbf B$ coincide with $\mathbf A$,
the master obtains $\begingroup
\setlength{\arraycolsep}{2pt}
\renewcommand{\arraystretch}{0.75}
\mathbf A
\begin{bmatrix}
P_0 & P_1 & \cdots & P_{N-1}
\end{bmatrix}^{T},
\endgroup$
which gives the $K_c$ desired linear combinations. Thus we have the following theorem.

\begin{theorem}
\label{thm:general-cyclic-dlsc}
Let
$\mathcal{C}_1\subsetneq\mathcal{C}_2\subsetneq\cdots
\subsetneq\mathcal{C}_{K_c}$ be nested
length-$N$ cyclic codes with minimum distances
$d_i=S+K_c-i+1$, $i \in [1:K_c]$, and let their selected
minimum-distance codewords have nested supports. Then, for $K=N$ and
any full-row-rank demand matrix
$\mathbf{A}\in\mathbb{F}^{K_c\times N}$, the proposed construction
yields a cyclic $(K,N,S,M,K_c)$ DLSC scheme with $M=S+K_c$
that guarantees deterministic recovery of any $K_c$ desired linear
combinations from the responses of any $N-S$ workers.
\end{theorem}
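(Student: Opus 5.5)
The proof will assemble the facts established in the construction above into three claims: the computation load is exactly $M=S+K_c$, the encoding matrix $\mathbf\Phi=\mathbf G^T\mathbf B$ respects the data assignment, and every response set $\mathcal R$ with $|\mathcal R|=N-S$ satisfies $\operatorname{Row}(\mathbf A)\subseteq\operatorname{Row}(\mathbf\Phi_{\mathcal R})$.

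\emph{Load and cyclic structure.} The nested-support condition in \eqref{eq:nested-constraints} gives $\bigcup_i\operatorname{supp}(\mathbf c_i)=\operatorname{supp}(\mathbf c_1)$, which has size $S+K_c$. Column $j$ of $\mathbf\Delta_j$'s underlying shift places $c_{i,j+r}$ in row $r$. Hence the set of nonzero rows of $\mathbf\Delta_j$ is $\{r:(r+j)\bmod N\in\operatorname{supp}(\mathbf c_1)\}$, a cyclic shift of $\operatorname{supp}(\mathbf c_1)$. Worker $W_i$ therefore holds fragment $j$ if and only if $(i+j)\bmod N\in\operatorname{supp}(\mathbf c_1)$. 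Counting over $j$ gives exactly $|\operatorname{supp}(\mathbf c_1)|=S+K_c$ fragments per worker, and the assignment has the cyclic form of Section~\ref{sys-model}, up to the sign convention of the shift, which only relabels $\mathcal S_0^{(0)}$.

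\emph{Consistency.} Write $\mathbf\Delta_j=\mathbf G^T\mathbf B'_j$. The step I expect to need the most care is the existence of $\mathbf v_j$, which requires the bottom $K_c\times K_c$ block of $\mathbf B'_j$ to be nonsingular; for this I will invoke the Appendix result. The underlying reason is as follows. Cyclic shifts of a codeword of $\mathcal C_i$ stay in $\mathcal C_i$, so column $i$ of $\mathbf B'_j$ has zero coordinates beyond $k_i$. Since $k_i=k_1+i-1$, the bottom $K_c$ rows of $\mathbf B'_j$ form a triangular matrix. Its diagonal entries are nonzero, because $\mathbf c_i\in\mathcal C_i\setminus\mathcal C_{i-1}$, the minimum distances being distinct, and a shift by $j$ multiplies the message coordinate $e$ by $\alpha^{js_e}\neq0$.

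Given $\mathbf v_j=\mathbf B'_j\mathbf u_j$, we get $\mathbf\Phi(:,j)=\mathbf G^T\mathbf v_j=\mathbf\Delta_j\mathbf u_j$. Any zero row of $\mathbf\Delta_j$ then yields a zero entry of $\mathbf\Phi(:,j)$, so $\phi_{i,j}=0$ whenever $j\notin\mathcal S_i$.

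\emph{Recovery.} $\mathcal C_{K_c}$ has dimension $k_{K_c}$ and minimum distance $S+1$. Any $N-S$ coordinates are therefore an information set: if $\mathbf G^T(\mathcal R,:)$ had a nonzero kernel vector $\mathbf x$, the codeword $\mathbf x^T\mathbf G$ would vanish on $\mathcal R$ and so have weight at most $S$, a contradiction. Thus $\mathbf G^T(\mathcal R,:)$ has full column rank and a left inverse $\mathbf L$. This gives $\mathbf L\mathbf\Phi_{\mathcal R}=\mathbf B$, so $\operatorname{Row}(\mathbf B)\subseteq\operatorname{Row}(\mathbf\Phi_{\mathcal R})$. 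The last $K_c$ rows of $\mathbf B$ equal $\mathbf A$, hence $\operatorname{Row}(\mathbf A)\subseteq\operatorname{Row}(\mathbf\Phi_{\mathcal R})$, and the master decodes by applying the bottom $K_c$ rows of $\mathbf L$ to $\mathbf Y_{\mathcal R}$. Recovery thus holds deterministically for every full-row-rank $\mathbf A$ and every response set $\mathcal R$ of size $N-S$.
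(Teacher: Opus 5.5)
Your proposal is correct and follows essentially the same route as the paper. The load comes from the shifted support of $\mathbf c_1$, consistency from $\mathbf v_j\in\operatorname{Col}(\mathbf B'_j)$ using the triangular bottom block of Appendix~\ref{app:Bj-nonsingular}, and recovery from erasure decoding of $\mathcal C_{K_c}$ on any $N-S$ coordinates, which you make explicit through a left inverse of $\mathbf G^T(\mathcal R,:)$.

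One small caveat concerns your sketch of the appendix. The argument $\mathbf c_i\in\mathcal C_i\setminus\mathcal C_{i-1}$ only covers $i\geq 2$. For $i=1$, the nonzero diagonal entry $b^{(0)}_{k_1,1}$ does not follow from a distance argument; it is secured by ordering the top $k_1$ rows of $\mathbf G$, as the paper does. The factor $\alpha^{js_{k_1-1}}$ then carries it to every shift.
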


For $K=\mu N$, where $\mu$ is a positive integer,
the $K$ data fragments are divided into $\mu$
groups of size $N$, with the cyclic data assignment
applied independently to each group.
Each worker combines its encoded computations
across the $\mu$ groups into a single response
before transmission. Consequently, the per-worker
computation load scales to
$M=\mu(S+K_c)=\frac{K}{N}(S+K_c)$,
while deterministic recovery from any $N-S$
workers is preserved.

\begin{theorem}
\label{thm:converse}
Suppose $N-S\geq K_c$ and
$K>(K_c-1)\binom{N}{N-S-K_c+1}$.
Any cyclic $(K,N,S,M,K_c)$ DLSC scheme guaranteeing deterministic
recovery of any $K_c$ desired linear combinations
from any $N-S$ workers must satisfy $M\geq\frac{K}{N}(S+K_c).$
\end{theorem}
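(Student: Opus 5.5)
Since the assignment is cyclic and uniform, $M=\mu|\mathcal S_0^{(0)}|$. Equivalently, every data fragment $D_k$ is stored at exactly $r\triangleq M/\mu=|\mathcal S_0^{(0)}|$ workers, because each fragment index in a block is a cyclic shift of $\mathcal S_0^{(0)}$. It therefore suffices to show $r\geq S+K_c$. I will argue by contradiction: suppose $r\leq S+K_c-1$, and construct a demand matrix $\mathbf A$ together with a response set $\mathcal R$ of size $N-S$ for which $\operatorname{Row}(\mathbf A)\not\subseteq\operatorname{Row}(\mathbf\Phi_{\mathcal R})$, whatever encoding matrix $\mathbf\Phi$ is used.

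The first step is to fix, for each fragment $k$, its replica set $\mathcal W_k\subseteq[0:N-1]$ of size $r$. Because $r\le S+K_c-1$, the complement has size $N-r\geq N-S-K_c+1$. So we can choose a set $\mathcal T_k\subseteq[0:N-1]\setminus\mathcal W_k$ with $|\mathcal T_k|=N-S-K_c+1$; this is possible since $N-S\ge K_c$. Now apply pigeonhole: there are only $\binom{N}{N-S-K_c+1}$ possible such sets, and $K>(K_c-1)\binom{N}{N-S-K_c+1}$. Hence some fixed set $\mathcal T$ equals $\mathcal T_k$ for at least $K_c$ distinct fragments $k_1,\dots,k_{K_c}$. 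Each of these fragments is absent from every worker in $\mathcal T$, so $\phi_{i,k_j}=0$ for all $i\in\mathcal T$ and all $j$.

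Next, choose the demand $\mathbf A$ to be the $K_c\times K$ matrix selecting exactly these fragments, i.e.\ $\mathbf A=[\mathbf e_{k_1};\dots;\mathbf e_{k_{K_c}}]$, which has full row rank. We then enlarge $\mathcal T$ by any $K_c-1$ further workers to obtain $\mathcal R$ with $|\mathcal R|=N-S$. Recovery requires $\operatorname{Row}(\mathbf A)\subseteq\operatorname{Row}(\mathbf\Phi_{\mathcal R})$. Restrict attention to the columns $\mathcal J=\{k_1,\dots,k_{K_c}\}$. The submatrix $\mathbf A(:,\mathcal J)=\mathbf I_{K_c}$ has rank $K_c$, and any row combination of $\mathbf\Phi_{\mathcal R}$ restricted to $\mathcal J$ lies in the row space of $\mathbf\Phi(\mathcal R,\mathcal J)$. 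But the rows of $\mathbf\Phi(\mathcal R,\mathcal J)$ indexed by $\mathcal T$ are zero, so $\operatorname{rank}\mathbf\Phi(\mathcal R,\mathcal J)\le |\mathcal R|-|\mathcal T|=K_c-1$. This is a contradiction. Therefore $r\ge S+K_c$ and $M=\mu r\ge \frac KN(S+K_c)$.

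The main obstacle is the pigeonhole step: the threshold in the hypothesis must match exactly the number of candidate "blind" sets. I would double-check that choosing $\mathcal T_k$ of size exactly $N-S-K_c+1$, rather than the whole complement, is what makes the count $\binom{N}{N-S-K_c+1}$ appear. I would also confirm that the cyclic structure is only used to ensure that every fragment has the same replication $r$ tied to $M$ via $M=\mu r$. Indeed, $\sum_i|\mathcal S_i|=NM=\sum_k|\mathcal W_k|$, and under the cyclic assignment all $|\mathcal W_k|=|\mathcal S_0^{(0)}|$. So the counting is consistent even if one only has the averaged statement. The deterministic requirement, namely that recovery must hold for every full-rank $\mathbf A$, is exactly what lets us pick the adversarial $\mathbf A$ after the assignment is fixed. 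This is legitimate because the assignment depends only on $K_c$, while $\mathbf\Phi$ may depend on $\mathbf A$, and the argument above rules out every such $\mathbf\Phi$.
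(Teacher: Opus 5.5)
Your proof is correct and follows essentially the same route as the paper's proof in Appendix~\ref{app:converse}. Both arguments run the same way: the cyclic assignment gives uniform replication $M/\mu=NM/K\le S+K_c-1$; a pigeonhole over the $\binom{N}{N-S-K_c+1}$ possible sets of non-storing workers yields $K_c$ fragments $\mathcal{J}$ sharing a common such set; a response set of size $N-S$ containing it gives $\operatorname{rank}\mathbf{\Phi}(\mathcal{R},\mathcal{J})\le K_c-1$; and a demand with $\operatorname{rank}\mathbf{A}(:,\mathcal{J})=K_c$ produces the contradiction. The differences are only cosmetic: you pad $\mathcal{T}$ with $K_c-1$ workers instead of choosing the straggler set outside it, and you fix the concrete demand $\mathbf{A}(:,\mathcal{J})=\mathbf{I}_{K_c}$.
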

The proof is given in Appendix~\ref{app:converse}. Together with Theorem~\ref{thm:general-cyclic-dlsc} and its extension to $K=\mu N$,
this establishes the asymptotic computation-load optimality of the proposed
construction.
\subsection{RS- and LRC-Based Cyclic DLSC Constructions}
\label{subsec:rs-lrc}
We next illustrate the proposed cyclic DLSC framework using Reed--Solomon (RS) codes and cyclic Tamo--Barg LRCs~\cite{Tamo2016}. 
For ease of illustration, we consider specific parameters,
while the construction naturally extends to more general parameters. We consider
$N=K=12$, $S=5$, and $K_c=3$ over $\mathbb F_{13}$,
with the desired linear combinations specified by
the full-row-rank demand matrix:
\begin{equation}
\begingroup
\setlength{\arraycolsep}{2pt}
\renewcommand{\arraystretch}{0.72}
\mathbf A=
\begin{bmatrix}
1&1&1&1&1&1&1&1&1&1&1&1\\
1&2&3&4&5&6&7&8&9&10&11&12\\
1&4&9&3&12&10&10&12&3&9&4&1
\end{bmatrix}.
\endgroup
\label{eq:common-demand}
\end{equation}

\subsubsection{Reed--Solomon-Based Construction}
\label{subsubsec:rs-construction}

For the RS realization, let $\alpha=2$ be a primitive $12$-th root of unity in $\mathbb{F}_{13}$. We consider a nested sequence of RS codes $\mathcal{C}_1 \subsetneq \mathcal{C}_2 \subsetneq \mathcal{C}_3$ with respective $[n, k_i, d_i]$ parameters $[12,5,8]$, $[12,6,7]$, and $[12,7,6]$. Each code $\mathcal{C}_i$ is constructed by evaluating all polynomials of degree at most $k_i-1$ at the points $1,\alpha,\ldots,\alpha^{11}$. The generator matrix, whose first $k_i$ rows generate
$\mathcal C_i$, $i=1,2,3$, is given by:
\[
\begingroup
\setlength{\arraycolsep}{2pt}
\renewcommand{\arraystretch}{0.72}
\mathbf G_{\rm RS}
=
\begin{bmatrix}
1&1&1&\cdots&1\\
1&2&2^2&\cdots&2^{11}\\
1&2^2&(2^2)^2&\cdots&(2^2)^{11}\\
\vdots&\vdots&\vdots&\ddots&\vdots\\
1&2^6&(2^6)^2&\cdots&(2^6)^{11}
\end{bmatrix}
\in\mathbb F_{13}^{7\times12}.
\endgroup
\]
To construct minimum-distance codewords with nested supports,
choose the root sets $\mathcal T_1=\{\alpha^0,\alpha^1,\alpha^2,\alpha^3\},
\mathcal T_2=\mathcal T_1\cup\{\alpha^4\},
\mathcal T_3=\mathcal T_2\cup\{\alpha^5\},$
and define the corresponding encoding polynomials as:
\[
u_i(x)=\prod_{\beta\in\mathcal T_i}(x-\beta),
\qquad i=1,2,3.
\]
Evaluating $u_i(x)$ at $1,\alpha,\ldots,\alpha^{11}$ gives
\begin{equation*}
\begin{aligned}
\mathbf c_1&=[0,0,0,0,10,11,10,5,7,3,6,1],\\
\mathbf c_2&=[0,0,0,0,0,7,12,1,3,6,3,4],\\
\mathbf c_3&=[0,0,0,0,0,0,7,5,9,7,12,4].
\end{aligned}
\label{eq:rs-min-codewords}
\end{equation*}
Hence, $\operatorname{wt}(\mathbf c_i)=d_i$ and
$\operatorname{supp}(\mathbf c_3)\subsetneq
\operatorname{supp}(\mathbf c_2)\subsetneq
\operatorname{supp}(\mathbf c_1)$.
Therefore, the nested RS codes and the selected minimum-distance
codewords satisfy \eqref{eq:nested-constraints}.
The first block of $\mathbf \Delta$ is $\begingroup\setlength{\arraycolsep}{2pt}\renewcommand{\arraystretch}{0.72}\mathbf\Delta_0=\begin{bmatrix}\mathbf c_1^T&\mathbf c_2^T&\mathbf c_3^T\end{bmatrix},\endgroup$ and the remaining blocks $\mathbf\Delta_1,\ldots,\mathbf\Delta_{11}$ are obtained by successively cyclically shifting the rows of $\mathbf\Delta_0$ upward by one position at a time. In particular, $\operatorname{supp}(\mathbf c_1)$ and its
successive cyclic shifts specify the data placement. Since
$\operatorname{wt}(\mathbf c_1)=8$, each data partition is assigned
to eight workers and, by cyclic symmetry, each worker stores eight
partitions; hence, $M=8=S+K_c$.

With $\mathbf\Delta_0=\mathbf G_{\rm RS}^{T}\mathbf{B}_0'$, the corresponding
message matrix is
\[
\begingroup
\setlength{\arraycolsep}{2pt}
\renewcommand{\arraystretch}{0.72}
\mathbf{B}_0'=
\begin{bmatrix}
\mathbf b_1&\mathbf b_2&\mathbf b_3
\end{bmatrix}
=
\begin{bmatrix}
12&3&8\\
10&8&7\\
5&8&12\\
11&11&7\\
1&8&2\\
0&1&2\\
0&0&1
\end{bmatrix}.
\endgroup
\]
The bottom $K_c \times K_c$ submatrix of $\mathbf{B}_0'$ is nonsingular. The same property holds for the matrices $\mathbf{B}_j'$ associated with the other cyclically shifted blocks $\mathbf\Delta_j$. Therefore, following the general construction, the columns of the demand matrix in \eqref{eq:common-demand} can be embedded in the last $K_c$ entries of the corresponding vectors $\mathbf v_j\in\operatorname{Col}(\mathbf{B}_j')$. Defining
$\mathbf B_{\rm RS}=[\mathbf v_0\,\cdots\,\mathbf v_{11}]$ and
$\mathbf\Phi_{\rm RS}=\mathbf G_{\rm RS}^T\mathbf B_{\rm RS}$,
the last $K_c$ rows of $\mathbf B_{\rm RS}$ equal $\mathbf A$.
Since $\mathcal C_3$ has $d_3=6=S+1$, the master can recover the
desired three linear combinations from any $N-S=7$ workers.

\subsubsection{Cyclic Tamo--Barg LRC-Based Construction}
\label{subsubsec:lrc-construction}

We next instantiate the proposed framework using cyclic Tamo--Barg LRCs~\cite{Tamo2016}. A linear code of length $N$ has disjoint $(r,\delta)$-locality if its symbols can be partitioned into disjoint \emph{local repair groups} of size $r+\delta-1$, such that any $\delta-1$ erasures within a group are recoverable from the remaining $r$ symbols. Cyclic Tamo--Barg codes naturally possess this property. Following~\cite{Tamo2016}, their roots are specified by an index set $\mathcal U=\mathcal L\cup\mathcal D$, whose elements dictate the powers of a primitive $N$-th root of unity. The {\it locality index set}
\[
\mathcal L=
\left\{
j\in[0:N-1]:
j\bmod(r+\delta-1)\in[0:\delta-2]
\right\}
\]
establishes this $(r,\delta)$-locality structure, while the {\it distance index set} $\mathcal D$ comprises $d_{\min}-1$ consecutive integers (modulo $N$) to guarantee a minimum distance of at least $d_{\min}$ via the BCH bound. The resulting code is minimum-distance optimal, as it achieves the upper bound in \cite[Theorem~2.1]{Kamath2014} with equality.

To construct the required nested family, we fix a common locality index set $\mathcal L$ and select nested distance index sets $\mathcal D_{K_c}\subsetneq\cdots\subsetneq\mathcal D_1$. The corresponding root index sets $\mathcal U_i=\mathcal L\cup\mathcal D_i$ yield a nested sequence of cyclic LRCs $\mathcal C_1\subsetneq\cdots\subsetneq \mathcal C_{K_c}$. Crucially, maintaining a common $\mathcal L$ across all root sets ensures that every code in the nested sequence possesses the exact same local repair groups. The distance index sets are chosen to guarantee $d_i=S+K_c-i+1$.

In order to satisfy the minimum-distance conditions
in \eqref{eq:nested-constraints}, the dimension of
the smallest subcode $\mathcal C_1$ must be of the
form $k_1=ar+x$, where $a\in\mathbb Z_{\geq 0}$,
$x\in\mathbb Z$, $r\geq K_c$, and
$1\leq x\leq r-K_c+1$.

To illustrate, let $N=K=12$, $S=5$, $K_c=3$, and $(r,\delta)=(3,2)$. We select the common locality index set $\mathcal L=\{0,4,8\}$ along with the nested distance index sets $\mathcal D_1=\{0,1,\dots,6\}$, $\mathcal D_2=\{0,1,\dots,5\}$, and $\mathcal D_3=\{0,1,\dots,4\}$. This configuration yields the nested sequence of cyclic LRCs $\mathcal C_1\colon [12,4,8] \subsetneq \mathcal C_2\colon [12,5,7] \subsetneq \mathcal C_3\colon [12,6,6]$. The corresponding generator matrix $\mathbf G_{\rm LRC}$,
whose first $k_i$ rows generate $\mathcal C_i$, is

\begin{equation*}
\resizebox{0.75\textwidth}{!}{%
$
\begingroup
\setlength{\arraycolsep}{0.8pt}
\renewcommand{\arraystretch}{0.72}
\mathbf G_{\rm LRC}=
\begin{bmatrix}
    \cellcolor{red!12}1&
\cellcolor{blue!12}2&
\cellcolor{green!12}2^2&
\cellcolor{red!12}2^3&
\cellcolor{blue!12}2^4&
\cellcolor{green!12}2^5&
\cellcolor{red!12}2^6&
\cellcolor{blue!12}2^7&
\cellcolor{green!12}2^8&
\cellcolor{red!12}2^9&
\cellcolor{blue!12}2^{10}&
\cellcolor{green!12}2^{11}\\

\cellcolor{red!12}1&
\cellcolor{blue!12}2^2&
\cellcolor{green!12}{(2^2)}^2&
\cellcolor{red!12}{(2^2)}^3&
\cellcolor{blue!12}{(2^2)}^4&
\cellcolor{green!12}{(2^2)}^5&
\cellcolor{red!12}{(2^2)}^6&
\cellcolor{blue!12}{(2^2)}^7&
\cellcolor{green!12}{(2^2)}^8&
\cellcolor{red!12}{(2^2)}^9&
\cellcolor{blue!12}{(2^2)}^{\!10}&
\cellcolor{green!12}{(2^2)}^{\!11}\\

\cellcolor{red!12}1&
\cellcolor{blue!12}2^3&
\cellcolor{green!12}{(2^3)}^2&
\cellcolor{red!12}{(2^3)}^3&
\cellcolor{blue!12}{(2^3)}^4&
\cellcolor{green!12}{(2^3)}^5&
\cellcolor{red!12}{(2^3)}^6&
\cellcolor{blue!12}{(2^3)}^7&
\cellcolor{green!12}{(2^3)}^8&
\cellcolor{red!12}{(2^3)}^9&
\cellcolor{blue!12}{(2^3)}^{\!10}&
\cellcolor{green!12}{(2^3)}^{\!11}\\

\cellcolor{red!12}1&
\cellcolor{blue!12}2^5&
\cellcolor{green!12}{(2^5)}^2&
\cellcolor{red!12}{(2^5)}^3&
\cellcolor{blue!12}{(2^5)}^4&
\cellcolor{green!12}{(2^5)}^5&
\cellcolor{red!12}{(2^5)}^6&
\cellcolor{blue!12}{(2^5)}^7&
\cellcolor{green!12}{(2^5)}^8&
\cellcolor{red!12}{(2^5)}^9&
\cellcolor{blue!12}{(2^5)}^{\!10}&
\cellcolor{green!12}{(2^5)}^{\!11}\\

\cellcolor{red!12}1&
\cellcolor{blue!12}2^6&
\cellcolor{green!12}{(2^6)}^2&
\cellcolor{red!12}{(2^6)}^3&
\cellcolor{blue!12}{(2^6)}^4&
\cellcolor{green!12}{(2^6)}^5&
\cellcolor{red!12}{(2^6)}^6&
\cellcolor{blue!12}{(2^6)}^7&
\cellcolor{green!12}{(2^6)}^8&
\cellcolor{red!12}{(2^6)}^9&
\cellcolor{blue!12}{(2^6)}^{\!10}&
\cellcolor{green!12}{(2^6)}^{\!11}\\

\cellcolor{red!12}1&
\cellcolor{blue!12}2^7&
\cellcolor{green!12}{(2^7)}^2&
\cellcolor{red!12}{(2^7)}^3&
\cellcolor{blue!12}{(2^7)}^4&
\cellcolor{green!12}{(2^7)}^5&
\cellcolor{red!12}{(2^7)}^6&
\cellcolor{blue!12}{(2^7)}^7&
\cellcolor{green!12}{(2^7)}^8&
\cellcolor{red!12}{(2^7)}^9&
\cellcolor{blue!12}{(2^7)}^{\!10}&
\cellcolor{green!12}{(2^7)}^{\!11}
\end{bmatrix}
\endgroup
$
}
\end{equation*}

The three colors distinguish the local repair groups: $\mathcal G_1=\{0,3,6,9\},\mathcal G_2=\{1,4,7,10\},\mathcal G_3=\{2,5,8,11\}$. Each group forms a local $[4,3,2]$ MDS code, such that any three code symbols within a group uniquely determine the remaining symbol. The specific minimum-weight codewords satisfying \eqref{eq:nested-constraints} are obtained as:
\begin{equation*}
\begin{aligned}
\mathbf c_1&=[0,1,8,0,8,12,0,12,5,0,5,1],\\
\mathbf c_2&=[0,1,5,0,0,3,0,8,9,0,4,9],\\
\mathbf c_3&=[0,1,1,0,0,6,0,0,2,0,12,4],
\end{aligned}
\end{equation*}
with corresponding message vectors
$\mathbf m_1=[3,0,0,10]$,
$\mathbf m_2=[9,10,0,4,3]$, and
$\mathbf m_3=[11,4,9,2,9,4]$.
The procedure to obtain these specific vectors is detailed in Appendix~\ref{app:lrc-minweight}.

As in the RS-based construction, the first block of $\mathbf\Delta$ is
$\begingroup
\setlength{\arraycolsep}{2pt}
\renewcommand{\arraystretch}{0.72}
\mathbf\Delta_0=
\begin{bmatrix}
\mathbf c_1^T&\mathbf c_2^T&\mathbf c_3^T
\end{bmatrix},
\endgroup$
and the remaining blocks $\mathbf\Delta_1,\ldots,\mathbf\Delta_{11}$ are obtained by successively cyclically shifting the rows of $\mathbf\Delta_0$. Since $\operatorname{wt}(\mathbf c_1)=8$, the resulting cyclic data assignment achieves $M=8=S+K_c$.

Since $\mathbf\Delta_0=\mathbf G_{\rm LRC}^{T}\mathbf{B}_0'$,
the corresponding message matrix $\mathbf{B}_0'$, formed by
zero-padding $\mathbf m_1,\mathbf m_2$, and $\mathbf m_3$
to length six, is given by:
\[
\begingroup
\setlength{\arraycolsep}{2pt}
\renewcommand{\arraystretch}{0.72}
\mathbf{B}_0'
=
\begin{bmatrix}
3&9&11\\
0&10&4\\
0&0&9\\
10&4&2\\
0&3&9\\
0&0&4
\end{bmatrix}.
\endgroup
\]
The bottom $K_c\times K_c$ submatrix of $\mathbf{B}_0'$ is nonsingular. The same property holds for the matrices $\mathbf{B}_j'$ associated with the other cyclically shifted blocks $\mathbf\Delta_j$. Hence, following the identical procedure used in the RS-based construction, the columns of the demand matrix in \eqref{eq:common-demand} are embedded into the last $K_c$ entries of the corresponding vectors $\mathbf v_j\in\operatorname{Col}(\mathbf{B}_j')$. Thus,
we obtain
$\mathbf B_{\rm LRC}=[\mathbf v_0\,\cdots\,\mathbf v_{11}]$
and
$\mathbf\Phi_{\rm LRC}=\mathbf G_{\rm LRC}^T\mathbf B_{\rm LRC}$,
where the last $K_c$ rows of $\mathbf B_{\rm LRC}$
coincide with $\mathbf A$. Since $d_3=6=S+1$, the master successfully recovers the $K_c=3$ desired linear combinations from any $N-S=7$ workers.

In addition to this worst-case guarantee, the inherent locality structure allows recovery from specific straggler patterns containing more than $S$ stragglers. For example, consider a scenario with six straggling workers $\mathcal F=\{W_0,W_1,W_2,W_3,W_6,W_9\}.$. Although $|\mathcal F|=6>S$, local groups $\mathcal{G}_2$ and $\mathcal{G}_3$ (blue and green in $\mathbf G_{\rm LRC}$) contain only one straggler each ($W_1$ and $W_2$). Because each group forms a local $[4,3,2]$ MDS code, the responses of $W_1$ and $W_2$ are directly reconstructed from the surviving workers in their respective groups. This local repair reduces the number of unresolved stragglers to four, which is comfortably handled by the standard $S=5$ tolerance of $\mathcal{C}_3$. Thus, unlike the RS realization—which operates under the exact same $M=8$ load and worst-case $S=5$ threshold—the LRC approach exploits local repair to circumvent the hard straggler bottleneck without incurring any additional computation or communication overhead.

\subsection{Computation-Latency Analysis}
\label{subsec:latency-analysis}
We compare the computation latency of the RS- and LRC-based constructions under worker straggling. Following~\cite{Lee2018}, the computation time $T_i$ of worker $W_i$ is modeled as an independent shifted-exponential random variable,
\begin{equation}
\Pr(T_i\leq t)=
\begin{cases}
1-\exp[-\lambda(t-\tau)], & t\geq\tau,\\
0, & t<\tau,
\end{cases}
\label{eq:worker-latency-model}
\end{equation}
where $\tau$ represents the deterministic minimum computation time, and the exponential tail (parameterized by $\lambda$) models the unpredictable delay introduced by straggling workers.

Let $T_{(0)}\leq T_{(1)}\leq\cdots\leq T_{(N-1)}$ denote the ordered worker completion times. The RS construction strictly requires $N-S$ responses, making its completion time $T_{\mathrm{RS}}=T_{(N-S-1)}$. In contrast, the LRC completion time, $T_{\mathrm{LRC}}$, is the earliest moment the received responses—augmented by applicable local repairs—satisfy the $K_c$ demand. Because the LRC naturally retains the baseline global recovery threshold, $T_{\mathrm{LRC}} \le T_{\mathrm{RS}}$. Strict inequality ($T_{\mathrm{LRC}} < T_{\mathrm{RS}}$) occurs whenever favorable straggler patterns allow local repair to finish the computation early, reducing latency despite both codes imposing the identical per-worker computation load $M$.

For the numerical evaluation, we consider $N=K=12$, $S=5$,
$K_c=3$, and $M=S+K_c=8$. The worker computation times are
generated independently according to \eqref{eq:worker-latency-model},
with $\tau=0.4~\mathrm{s}$ and $\lambda=0.4~\mathrm{s}^{-1}$.
Fig.~\ref{fig:rs-lrc-latency} compares the empirical cumulative
distribution functions (CDFs) of the completion times
$T_{\mathrm{RS}}$ and $T_{\mathrm{LRC}}$, estimated over
$10^5$ independent realizations. The leftward shift of the
LRC-based CDF indicates a higher probability of completing
the computation earlier. The average completion time decreases
from approximately $2.45~\mathrm{s}$ for the RS-based construction to
$2.17~\mathrm{s}$ for the LRC-based construction, corresponding to an
average latency reduction of approximately $11.43\%$.
Moreover, the LRC approach finishes strictly earlier than the RS approach in $67.67\%$ of the realizations, clearly demonstrating the benefit of pattern-dependent local repair.

\begin{figure}[t]
    \centering
    \includegraphics[width=0.7\columnwidth]{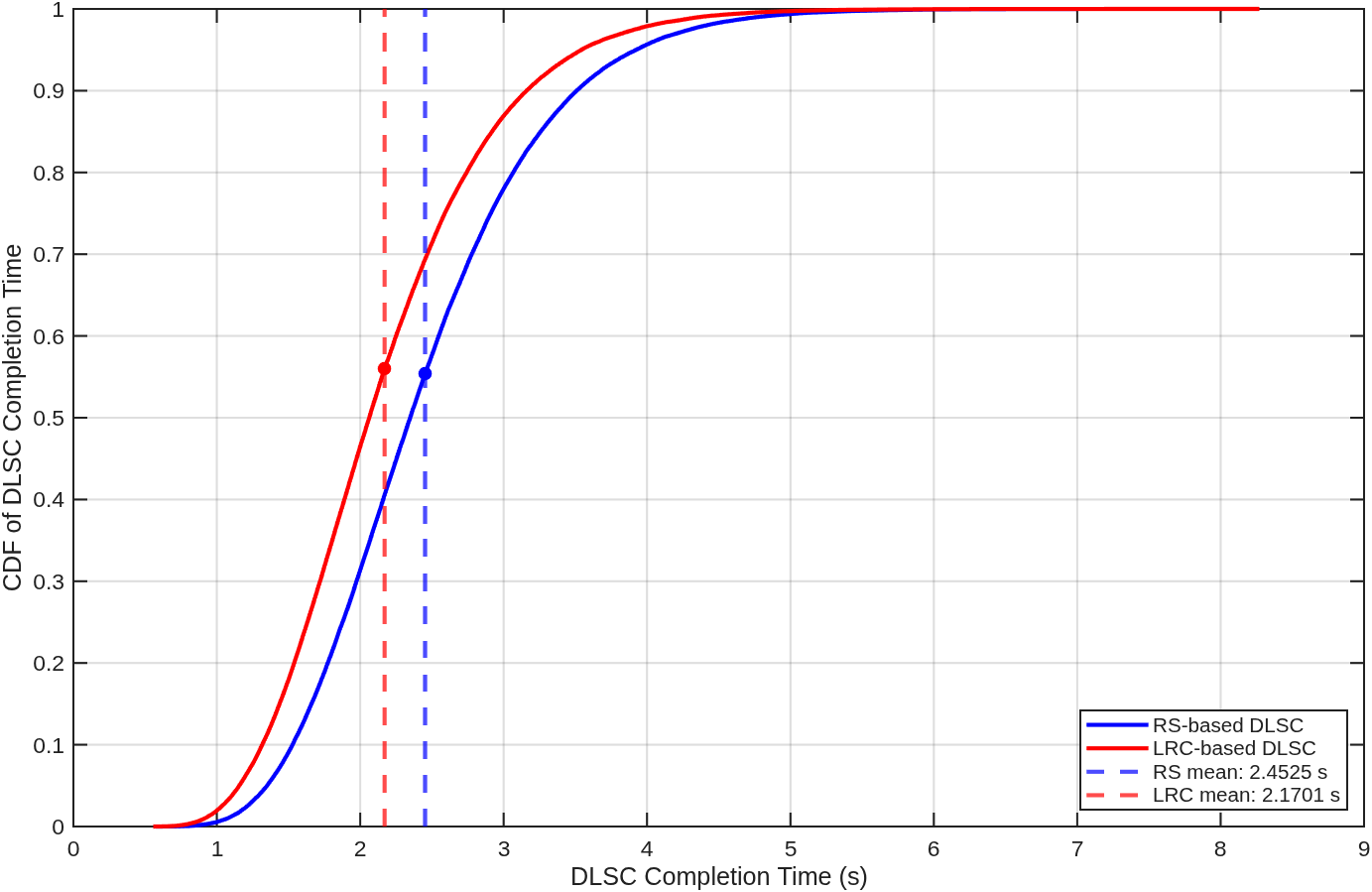}
    \caption{\small Empirical completion-time distributions of the RS- and LRC-based cyclic DLSC constructions under shifted-exponential worker computation times. The LRC approach achieves strictly earlier completion for favorable worker-response patterns by exploiting local repair.}
    \label{fig:rs-lrc-latency}
\end{figure}
\endgroup
\section{Communication-Efficient Cyclic DLSC Construction}
\label{sec:comm-reduced}
\begingroup
\setlength{\abovedisplayskip}{4pt}
\setlength{\belowdisplayskip}{4pt}
\setlength{\abovedisplayshortskip}{3pt}
\setlength{\belowdisplayshortskip}{3pt}

We next extend the proposed construction to reduce the per-worker communication load by a factor of $m$. Assume that $m\!\mid\!L$ and $mK_c\leq N-S$. For ease of presentation, let $K=N$. Each message $P_k\in\mathbb F^{L\times 1}$ is partitioned into $m$ equal-length \emph{submessages},
\begin{equation*}
\begingroup
\setlength{\arraycolsep}{1pt}
\renewcommand{\arraystretch}{0.75}
P_k^T=
\begin{bmatrix}
P_{k,1}^T & P_{k,2}^T & \cdots & P_{k,m}^T
\end{bmatrix},
\quad P_{k,t}\in\mathbb F^{{L/m}\times 1}.
\endgroup
\end{equation*}
Each worker transmits one coded response of length $L/m$.

To accommodate the $m$ submessages of each of the $K_c$ demanded
linear combinations, we employ $mK_c$ nested cyclic codes $\mathcal C_1\subsetneq\mathcal C_2\subsetneq\cdots
\subsetneq\mathcal C_{mK_c},$
with minimum distances and dimensions 
\[
d_i=S+mK_c-i+1,\qquad
k_i=k_1+i-1,\qquad i \in [1:mK_c].
\]

Similar to the general DLSC construction in Sec.~\ref{subsubsec:general-nested-construction}, we select minimum-distance codewords $\mathbf c_1,\ldots,\mathbf c_{mK_c}$ with nested supports. As before, the data assignment is specified by $\operatorname{supp}(\mathbf c_1)$ and its successive cyclic shifts. Since $\operatorname{wt}(\mathbf c_1)=d_1=S+mK_c$, each data partition is assigned to exactly $S+mK_c$ workers. The cyclic shifts distribute this data uniformly, yielding a per-worker computation load of $M=S+mK_c$.

Let $\mathbf G$ be a generator matrix of the supercode $\mathcal C_{mK_c}$. Following the general DLSC framework, we form the block matrix $\mathbf\Delta\in\mathbb F^{N\times NmK_c}$ using the $mK_c$ selected minimum-weight codewords and their cyclic shifts. We then factor $\mathbf\Delta=\mathbf G^T\mathbf B',$ where $\mathbf B'= [\mathbf B'_0\;\mathbf B'_1\;\cdots\;\mathbf B'_{N-1}]$, such that the bottom $mK_c\times mK_c$ submatrix of each block $\mathbf B'_j$ is nonsingular.

For the demand matrix $\mathbf A\in\mathbb F^{K_c\times N}$, consider the Kronecker product $\mathbf A\otimes\mathbf I_m \in\mathbb F^{mK_c\times Nm}$. Exploiting the nonsingularity of the corresponding submatrices of $\mathbf B'_j$, each block of $m$ consecutive columns $[jm:(j+1)m-1]$ from $\mathbf A\otimes\mathbf I_m$ can be embedded into the bottom $mK_c$ rows of a suitable matrix $\mathbf B_j$ whose columns lie in $\operatorname{Col}(\mathbf B'_j)$. Stacking these matrices yields $\mathbf B= [\mathbf B_0\;\mathbf B_1\;\cdots\;\mathbf B_{N-1}],$ where the bottom rows of $\mathbf B_j\in\mathbb F^{k_{mK_c}\times m}$ satisfy:
\begin{equation}
\mathbf B_j
(k_{mK_c}-mK_c+1:k_{mK_c},:)
=
\mathbf a_j\otimes\mathbf I_m,
\label{eq:comm-demand-embedding}
\end{equation}
with $\mathbf a_j$ denoting column $j$ of $\mathbf A$. The resulting encoding matrix is $\mathbf\Phi=\mathbf G^T\mathbf B$. Because the $m$ columns in $\mathbf B_j$ is strictly drawn from the column space of the corresponding $\mathbf B'_j$, the prescribed cyclic zero pattern—and thus the cyclic data assignment—is perfectly preserved.

The responses transmitted by the workers are given by 
\[
\mathbf Y = \mathbf\Phi [P_{0,1}\ \cdots\ P_{0,m}\ \cdots\ P_{N-1,1}\ \cdots\ P_{N-1,m}]^T.
\]
As in the baseline construction, the master can recover 
\[
\mathbf{B} [P_{0,1}\ \cdots\ P_{0,m}\ \cdots\ P_{N-1,1}\ \cdots\ P_{N-1,m}]^T
\]
by collecting responses from any $N-S$ workers. By focusing on the bottom $mK_c$ rows, the master obtains
\[
(\mathbf A\otimes\mathbf I_m)[P_{0,1}\ \cdots\ P_{0,m}\ \cdots\ P_{N-1,1}\ \cdots\ P_{N-1,m}]^T.
\]
This resulting matrix contains exactly the $m$ submessages for each of the $K_c$ desired linear combinations. By concatenating these submessages, $\mathbf A[P_0\ P_1\ \cdots\ P_{N-1}]^{T}$ can be reconstructed.

\begin{theorem}
\label{thm:comm-reduced-cyclic-dlsc}
Let $mK_c\leq N-S$, and consider a family of
$mK_c$ nested length-$N$ cyclic codes
$\mathcal{C}_1\subsetneq\cdots\subsetneq
\mathcal{C}_{mK_c}$ with minimum distances
$d_i=S+mK_c-i+1$, $i\in[1:mK_c]$,
whose selected minimum-distance codewords
have nested supports. Then, for $K=N$ and
any full-row-rank demand matrix
$\mathbf{A}\in\mathbb{F}^{K_c\times N}$,
the proposed communication-efficient construction
achieves a per-worker computation load
$M=S+mK_c$ and communication load $L/m$,
while guaranteeing deterministic recovery
from any $N-S$ workers.
\end{theorem}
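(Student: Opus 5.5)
The plan is to reduce Theorem~\ref{thm:comm-reduced-cyclic-dlsc} to the argument behind Theorem~\ref{thm:general-cyclic-dlsc}, treating the $Nm$ submessages $P_{k,t}$ as the new ``messages.'' The proof will check four things in order: (i) the data assignment has load $S+mK_c$; (ii) each block $\mathbf B_j$ with the required bottom rows exists; (iii) $\mathbf\Phi=\mathbf G^T\mathbf B$ respects the assignment; and (iv) decoding works from any $N-S$ workers and returns $\mathbf A[P_0\ \cdots\ P_{N-1}]^T$.

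For (i), $\operatorname{wt}(\mathbf c_1)=d_1=S+mK_c$ and the nested supports give that every column of $\mathbf\Delta_j$ is supported inside the cyclic shift of $\operatorname{supp}(\mathbf c_1)$. The counting argument from Sec.~\ref{subsubsec:general-nested-construction} then applies verbatim: each fragment goes to $S+mK_c$ workers, and by cyclic symmetry each worker holds $S+mK_c$ fragments.

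For (ii), I would invoke the nonsingularity argument of Appendix~\ref{app:Bj-nonsingular} with $K_c$ replaced by $mK_c$. It needs $k_i=k_1+i-1$ and the top $k_1$ rows of $\mathbf G$ ordered so that the last message coefficient of $\mathbf c_1$ is nonzero. The codeword $\mathbf c_i$ lies in $\mathcal C_i$ but not in $\mathcal C_{i-1}$, since $\operatorname{wt}(\mathbf c_i)=d_i<d_{i-1}$. So its message vector has a nonzero entry at position $k_i-1$ and zeros beyond, which makes the bottom $mK_c\times mK_c$ part of $\mathbf B'_0$ triangular and nonsingular. The cyclic shift acts on the message coordinates of the eigen-generator rows $\mathbf g_e$ by the diagonal scaling $\operatorname{diag}(\alpha^{js_e})$, so this nonsingularity transfers to every $\mathbf B'_j$. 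Call this bottom block $\mathbf Q_j$. Set $\mathbf B_j=\mathbf B'_j\mathbf Q_j^{-1}(\mathbf a_j\otimes\mathbf I_m)$; its columns lie in $\operatorname{Col}(\mathbf B'_j)$ and it satisfies \eqref{eq:comm-demand-embedding}.

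For (iii), each column of $\mathbf G^T\mathbf B_j$ equals $\mathbf\Delta_j\mathbf Q_j^{-1}(\mathbf a_j\otimes\mathbf I_m)$, a combination of the columns of $\mathbf\Delta_j$. Its zero rows therefore contain those of $\mathbf\Delta_j$, so a worker only uses submessages of fragments it stores. Each worker's response $Y_i$ is a combination of vectors of length $L/m$, which gives communication load $L/m$.

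For (iv), each column of $\mathbf Y$ is a codeword of $\mathcal C_{mK_c}$, with message given by the corresponding column of $\mathbf B[\cdots]^T$. Because $d_{mK_c}=S+1$, any $N-S$ coordinates determine the codeword, equivalently $\mathbf G(:,\mathcal R)$ has full row rank $k_{mK_c}$. The master therefore recovers $\mathbf B[P_{0,1}\cdots P_{N-1,m}]^T$. Its bottom $mK_c$ rows are $(\mathbf A\otimes\mathbf I_m)[\cdots]^T$, whose row $(a,t)$ equals $\sum_j a_{a,j}P_{j,t}^T$. Concatenating over $t$ yields row $a$ of $\mathbf A[P_0\cdots P_{N-1}]^T$.

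The main obstacle is step (ii): justifying that the bottom-block nonsingularity holds for all $j$, not only $j=0$. Even with the diagonal-scaling observation, one must check that the row ordering of $\mathbf G$ keeps the relevant $mK_c$ diagonal entries nonzero; the scaling only multiplies rows by nonzero constants, so it preserves this. I would verify this carefully and otherwise defer to Appendix~\ref{app:Bj-nonsingular}.
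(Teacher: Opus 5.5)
Your proposal is correct and follows the paper's argument essentially step for step. You count the load from $\operatorname{supp}(\mathbf c_1)$, obtain the triangular, nonsingular bottom block of $\mathbf B'_j$ as in Appendix~\ref{app:Bj-nonsingular} with $K_c$ replaced by $mK_c$, embed $\mathbf a_j\otimes\mathbf I_m$ using columns from $\operatorname{Col}(\mathbf B'_j)$, and decode by erasure-correcting $\mathcal C_{mK_c}$ before reading off the bottom $mK_c$ rows. Your explicit choice $\mathbf B_j=\mathbf B'_j\mathbf Q_j^{-1}(\mathbf a_j\otimes\mathbf I_m)$ and the identity $\mathbf B'_j=\operatorname{diag}(\alpha^{js_e})\mathbf B'_0$ only make explicit what the paper leaves implicit, and the identity already settles the ``obstacle'' you flag.
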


\begin{remark}\normalfont
For $K_c=1$, the proposed communication-efficient construction generalizes the gradient coding scheme in~\cite{Ye2018}. In~\cite{Ye2018}, polynomials with prescribed roots and their lifted counterparts are evaluated at distinct field elements to generate the worker encoding coefficients. From a coding-theoretic perspective, this amounts to using a Vandermonde generator matrix to produce codewords whose prescribed zero positions enforce the data assignment. Our framework bypasses this specific Vandermonde requirement. By utilizing nested cyclic codes and obtaining the required data-assignment patterns from their minimum-distance codewords, we enable the use of more general generator matrices. When restricted to RS codes for $K_c=1$, our framework reduces exactly to the Vandermonde-based approach of~\cite{Ye2018}, sharing its identical communication--computation tradeoff $M=S+m$ and hard straggler threshold of $N-S$.
\end{remark}

For $K=\mu N$, where $\mu$ is a positive integer, the construction naturally extends by partitioning the $K$ messages into $\mu$ disjoint subsets of size $N$ and applying the cyclic assignment independently to each. Because the desired computation is linear, each worker simply sums its locally encoded responses across all $\mu$ subsets prior to transmission. This scales the total per-worker computation load to $M=\mu(S+mK_c)=\frac{K}{N}(S+mK_c)$, while strictly maintaining the reduced communication load of $L/m$.

Furthermore, for $N-S\geq mK_c$ and sufficiently large $K$, we establish a matching converse (proved in Appendix~\ref{app:comm-converse}). Any cyclic DLSC scheme with communication load $L/m$ that guarantees the recovery of any $K_c$ linear combinations from any $N-S$ workers must satisfy $M\geq\frac{K}{N}(S+mK_c)$. This formally establishes the aymptotic computation-load optimality of the proposed construction.
\endgroup

\section{Conclusion}
In this work, we developed a DLSC framework based on nested cyclic codes. Our approach maintains a fixed, cyclic data assignment while guaranteeing the deterministic recovery of arbitrary prescribed linear combinations. The RS- and LRC-based constructions provide explicit realizations of this framework, with the latter exploiting local repair to enable recovery beyond the worst-case straggler threshold. Furthermore, we extended the construction to offer a flexible tradeoff between computation and communication loads. Finally, by deriving matching asymptotic converse bounds, we formally established the computation-load optimality of the proposed framework.

\subsection*{AI Statement}
The authors used OpenAI's ChatGPT (GPT-6) and Google's Gemini 3.1 Pro during the preparation
of this manuscript for grammar correction, sentence
rephrasing, and the generation of LaTeX code
for formatting the matrices $\mathbf{B}_j'$ (in Appendix~\ref{app:Bj-nonsingular}) and
$\mathbf G_{\rm LRC}$ (in Section~\ref{sec:cyclic-dlsc-framework}).
The authors reviewed, verified, and edited the
final content and take full responsibility for
the accuracy and integrity of the manuscript.

\bibliographystyle{ieeetr}
\bibliography{ref}
\appendices

\section{Nonsingularity of the Bottom $K_c\times K_c$
Submatrix of $\mathbf B'_j$}
\label{app:Bj-nonsingular}

For each $j\in[0:N-1]$ and $i\in[1:K_c]$,
let
$\mathbf b_i^{(j)}
=[b^{(j)}_{1,i},\ldots,b^{(j)}_{k_i,i}]^T$
denote the message vector corresponding to
the $i$-th codeword in $\mathbf\Delta_j$.

We first show that for every cyclically shifted version of the selected minimum-distance codeword $\mathbf{c}_1 \in \mathcal{C}_1$, the last coefficient of its corresponding message vector remains nonzero. Specifically, our goal is to establish that $b^{(j)}_{k_1,1} \neq 0$ for all $j \in [0:N-1]$. Recall that the top $k_1$ rows of $\mathbf{G}$ are ordered such that for the unshifted codeword $\mathbf{c}_1$, the message vector satisfies $b^{(0)}_{k_1,1} \neq 0$.

Let $\mathbf{g}_e = [1, \alpha^{s_e}, \ldots, \alpha^{(N-1)s_e}]$ denote the row indexed by $e \in [0:k_{K_c}-1]$ in $\mathbf{G}$, where $s_e$ is the corresponding exponent and $\alpha$ is a primitive $N$-th root of unity. Since
\[
\mathbf{c}_1 = \sum_{e=0}^{k_1-1} b^{(0)}_{e+1,1}\mathbf{g}_e,
\]
its $j$-th left cyclic shift has message coefficients
\[
b^{(j)}_{e+1,1} = \alpha^{j s_e}b^{(0)}_{e+1,1}.
\]
Consequently,
\[
b^{(j)}_{k_1,1} = \alpha^{j s_{k_1-1}}b^{(0)}_{k_1,1} \neq 0, \quad j\in[0:N-1],
\]
since every power of $\alpha$ is nonzero.



\begin{lem}
\label{lem:Bj-nonsingular}
Let $\mathbf{G}\in\mathbb{F}^{k_{K_c}\times N}$ be a generator matrix for the supercode $\mathcal{C}_{K_c}$, with rows $\mathbf{g}_e = [1\ \alpha^{s_e}\ \cdots\ \alpha^{(N-1)s_e}]$ for $e \in [0:k_{K_c}-1]$, where $\alpha$ is a primitive $N$-th root of unity such that the first
$k_i$ rows span $\mathcal C_i$ for each
$i\in[1:K_c]$.
Pad each message vector $\mathbf b_i^{(j)}$
with $k_{K_c}-k_i$ trailing zeros and
arrange the resulting vectors as the
columns of $\mathbf B'_j$. The resulting
matrix has the form
\[
\begingroup
\setlength{\arraycolsep}{2pt}
\renewcommand{\arraystretch}{0.82}
\mathbf{B}_j'\triangleq
\left[
\begin{array}{cccc}
b^{(j)}_{1,1} & b^{(j)}_{1,2} & \cdots
& b^{(j)}_{1,K_c}\\
b^{(j)}_{2,1} & b^{(j)}_{2,2} & \cdots
& b^{(j)}_{2,K_c}\\
\vdots & \vdots & \ddots & \vdots\\
b^{(j)}_{k_{K_c}-K_c,1}
&
b^{(j)}_{k_{K_c}-K_c,2}
&
\cdots
&
b^{(j)}_{k_{K_c}-K_c,K_c}\\
\hline
b^{(j)}_{k_{K_c}-K_c+1,1}
&
b^{(j)}_{k_{K_c}-K_c+1,2}
&
\cdots
&
b^{(j)}_{k_{K_c}-K_c+1,K_c}\\
0
&
b^{(j)}_{k_{K_c}-K_c+2,2}
&
\cdots
&
b^{(j)}_{k_{K_c}-K_c+2,K_c}\\
\vdots
&
\ddots
&
\ddots
&
\vdots\\
0
&
\cdots
&
0
&
b^{(j)}_{k_{K_c},K_c}
\end{array}
\right].
\endgroup
\]
The submatrix formed by the last
$K_c$ rows of $\mathbf B'_j$ is
nonsingular for every $j\in[0:N-1]$.
\end{lem}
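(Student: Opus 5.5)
The plan is to show that the bottom $K_c\times K_c$ block of $\mathbf B'_j$ is upper triangular with nonzero diagonal entries, so its determinant is the product of those entries.

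\emph{Step 1: triangular shape.} Since $k_i=k_1+i-1$, we have $k_{K_c}-K_c+1=k_1$. The bottom rows of $\mathbf B'_j$ are therefore indexed (one-based) by $k_1,k_1+1,\ldots,k_{K_c}$. Column $i$ is $\mathbf b_i^{(j)}$ padded with zeros after position $k_i=k_1+i-1$. Within the bottom block, column $i$ can thus be nonzero only in its first $i$ rows. This is exactly the displayed upper-triangular pattern, with diagonal entries $b^{(j)}_{k_i,i}$, $i\in[1:K_c]$.

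\emph{Step 2: each diagonal entry is nonzero for $j=0$.} Write $\mathbf c_i=\sum_{e=0}^{k_i-1}b^{(0)}_{e+1,i}\mathbf g_e$. Suppose, for contradiction, that $b^{(0)}_{k_i,i}=0$ for some $i\ge 2$. Then $\mathbf c_i$ lies in the span of the first $k_i-1=k_{i-1}$ rows of $\mathbf G$. By the ordering of the rows of $\mathbf G$, that span is $\mathcal C_{i-1}$. But $\operatorname{wt}(\mathbf c_i)=d_i=d_{i-1}-1<d_{i-1}$, and $\mathbf c_i\neq\mathbf 0$, which contradicts the minimum distance of $\mathcal C_{i-1}$. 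Hence $b^{(0)}_{k_i,i}\neq0$ for every $i\ge2$. For $i=1$, the entry $b^{(0)}_{k_1,1}\neq0$ holds by the chosen ordering of the top $k_1$ rows, as stated just before the lemma.

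\emph{Step 3: the shifts preserve this.} The $j$-th cyclic shift of $\mathbf c_i$ corresponds to message coefficients $b^{(j)}_{e+1,i}=\alpha^{js_e}b^{(0)}_{e+1,i}$. This is the same computation already carried out for $\mathbf c_1$ in the paragraph before the lemma, and it relies on $\mathbf g_e$ being a geometric row, so that shifting multiplies it by $\alpha^{js_e}$. Because the multiplier $\alpha^{js_e}$ is nonzero, $b^{(j)}_{k_i,i}\neq0$ for all $i$ and $j$. Since the row set of $\mathbf G$ is fixed, the shifted codewords lie in the same codes and the triangular pattern is unchanged. The determinant of the bottom block is then $\prod_i b^{(j)}_{k_i,i}\neq0$.

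The main point needing care is Step 2. It requires that the rows of $\mathbf G$ are linearly independent, so that the message coefficients are unique, and that the first $k_{i-1}$ rows span exactly $\mathcal C_{i-1}$. Both hold because the $\mathbf g_e$ have distinct exponents $s_e$ (a Vandermonde/DFT argument) together with the stated nested ordering. Alternatively, one could apply the Step 2 contradiction directly to the shifted codeword, since cyclic shifts preserve both membership in $\mathcal C_i$ and Hamming weight; this bypasses the explicit coefficient formula of Step 3.
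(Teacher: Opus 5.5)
Your proof is correct and follows the paper's argument. The bottom block is upper triangular because $k_i=k_{K_c}-K_c+i$; the $i=1$ diagonal entry is nonzero by the row ordering together with the shift formula $b^{(j)}_{e+1,1}=\alpha^{js_e}b^{(0)}_{e+1,1}$; and for $i\ge 2$ a zero diagonal entry would place a weight-$d_i$ codeword in $\mathcal C_{i-1}$, whose minimum distance is $d_{i-1}=d_i+1$. The only difference is cosmetic: the paper applies this weight argument directly to each shifted codeword, which is your stated alternative, whereas your main route runs it at $j=0$ and carries it to all $j$ via the shift formula for every $i$.
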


\begin{proof}
Since $k_i=k_{K_c}-K_c+i$, the bottom
$K_c\times K_c$ submatrix of $\mathbf B'_j$
is upper triangular, with diagonal entries
$b^{(j)}_{k_i,i}$, for $i\in[1:K_c]$. For $i=1$, we have
$b^{(j)}_{k_1,1}\neq0$ by the preceding
cyclic-shift argument.

For $i\in[2:K_c]$, the selected codeword
has weight $d_i$. Since $d_{i}=d_{i-1}-1$,
it cannot belong to $\mathcal C_{i-1}$. As the first
$k_{i-1}=k_i-1$ generator matrix rows span
$\mathcal C_{i-1}$, the coefficient
corresponding to the $k_i$-th generator matrix
row must be nonzero. Hence,
\[
b^{(j)}_{k_i,i}
=b^{(j)}_{k_{K_c}-K_c+i,i}
\neq0,\quad i \in [2:K_c].
\]
Thus, the bottom $K_c\times K_c$ submatrix
of $\mathbf B'_j$ is upper triangular
with nonzero diagonal entries and is
therefore nonsingular.
\end{proof}

\section{Proof of Theorem~\ref{thm:converse}}
\label{app:converse}
\begingroup
\setlength{\abovedisplayskip}{4pt}
\setlength{\belowdisplayskip}{4pt}
\setlength{\abovedisplayshortskip}{3pt}
\setlength{\belowdisplayshortskip}{3pt}

We proceed by contradiction. Suppose that
\begin{equation}
M<\frac{K}{N}(S+K_c).
\label{eq:contra-assumption}
\end{equation}

Under the cyclic data-assignment structure, every data fragment is
assigned to the same number of workers. Since each of the $N$ workers
is assigned $M$ data fragments, the total number of worker-data
assignments is $NM$. Therefore, each of the $K$ data fragments is
assigned to $\frac{NM}{K}$ workers. From \eqref{eq:contra-assumption}, $\frac{NM}{K}<S+K_c.$ So every data fragment is assigned to at most $S+K_c-1$
workers. Consequently, every data fragment is not assigned to at least $\rho\triangleq N-S-K_c+1$
workers. Moreover, $\rho\geq1$ because $N-S \geq K_c$.

For each data fragment $D_j$, choose any $\rho$ workers to which it is
not assigned, and denote their indices by $\mathcal{Z}_j\subseteq[0:N-1]$.
There are exactly $\binom{N}{\rho}$ possible choices for
$\mathcal{Z}_j$. By the hypothesis $K>(K_c-1)\binom{N}{\rho},$
the pigeonhole principle guarantees the existence of at least $K_c$
data fragments sharing the same selected set of $\rho$ unassigned workers.

Let $\mathcal{J}\subseteq[0:K-1]$, with $|\mathcal{J}|=K_c$, denote the
indices of any such $K_c$ data fragments, and let
$\mathcal{Z}\subseteq[0:N-1]$, with $|\mathcal{Z}|=\rho$, denote their
common set of unassigned workers. Since workers in $\mathcal{Z}$ do not
store any of the data fragments indexed by $\mathcal{J}$, the
corresponding entries of the encoding matrix must be zero. Thus,
\begin{equation}
\mathbf{\Phi}(\mathcal{Z},\mathcal{J})=\mathbf{0}.
\label{eq:zero-submatrix}
\end{equation}

Since $N-\rho=S+K_c-1\geq S$, we can choose a straggler set $\mathcal{F}\subseteq[0:N-1]\setminus\mathcal{Z}$ with $|\mathcal{F}|=S$. Thus, $\mathcal{R}=[0:N-1]\setminus\mathcal{F}$ denotes the indices of the non-straggling workers. Then, $|\mathcal{R}|=N-S$ and $\mathcal{Z}\subseteq\mathcal{R}$.

Consider the submatrix $\mathbf{\Phi}({\mathcal{R},\mathcal{J}})$. By \eqref{eq:zero-submatrix}, its $\rho$ rows indexed by $\mathcal{Z}$ are identically zero. Therefore, the number of potentially nonzero rows is at most 
\begin{equation*}
|\mathcal{R}|-\rho = N-S-(N-S-K_c+1) = K_c-1.
\end{equation*}
Consequently,
\begin{equation*}
\operatorname{rank}\bigl(\mathbf{\Phi}({\mathcal{R},\mathcal{J}})\bigr) \leq K_c-1.
\label{eq:rank-deficiency}
\end{equation*}

Choose a demand matrix
$\mathbf A\in\mathbb F^{K_c\times K}$
such that $\operatorname{rank}(\mathbf{A(:,\mathcal{J}}))=K_c.$ If the desired linear combinations were recoverable from the workers
in $\mathcal{R}$, then $\operatorname{Row}(\mathbf A)
\subseteq
\operatorname{Row}(\mathbf\Phi(\mathcal R,:)).$
Restricting both row spaces to the coordinates indexed by
$\mathcal J$ would therefore imply $\operatorname{Row}(\mathbf A(:,\mathcal J))
\subseteq
\operatorname{Row}(\mathbf\Phi(\mathcal R,\mathcal J))$. However, this is impossible since $\operatorname{rank}(\mathbf{A}(:,\mathcal{J})) = K_c$ while $\operatorname{rank}(\mathbf{\Phi}(\mathcal{R},\mathcal{J})) \leq K_c-1$. Hence, the assumption in \eqref{eq:contra-assumption} is false, and
therefore:
\begin{equation*}
M\geq\frac{K}{N}(S+K_c).
\label{eq:cyclic-converse-bound}
\end{equation*}
\endgroup

\section{Converse for the Communication-Efficient Setting}
\label{app:comm-converse}
\begingroup
\setlength{\abovedisplayskip}{4pt}
\setlength{\belowdisplayskip}{4pt}
\setlength{\abovedisplayshortskip}{3pt}
\setlength{\belowdisplayshortskip}{3pt}

We proceed by contradiction. Suppose that
\begin{equation*}
M<\frac{K}{N}(S+mK_c).
\end{equation*}
Following the counting and pigeonhole arguments in
Appendix~\ref{app:converse}, with $K_c$ replaced by
$mK_c$, each data fragment
is absent from at least $\rho=N-S-mK_c+1$
workers. Since
$K>(K_c-1)\binom{N}{\rho}$,
there exist a set $\mathcal{J}\subseteq[0:K-1]$ with $|\mathcal{J}|=K_c$ and $\mathcal{Z}\subseteq[0:N-1]$ with
$|\mathcal{Z}|=\rho$ such that none of the workers indexed by
$\mathcal{Z}$ stores any data fragment indexed by $\mathcal{J}$.

Choose a straggler set
$\mathcal{F}\subseteq[0:N-1]\setminus\mathcal{Z}$
with $|\mathcal{F}|=S$, and let
$\mathcal{R}=[0:N-1]\setminus\mathcal{F}$.
Then $\mathcal{Z}\subseteq\mathcal{R}$, and at most
\begin{equation}
\label{eq:comm-number-useful-workers}
|\mathcal{R}|-|\mathcal{Z}|
=N-S-\rho=mK_c-1
\end{equation}
non-straggling workers have access to the data fragments
indexed by $\mathcal{J}$.

Choose a demand matrix $\mathbf{A}\in\mathbb{F}^{K_c\times K}$ such that the submatrix $\mathbf{A}(:,\mathcal{J}) = \mathbf{I}_{K_c}$ and all remaining entries are zero. Satisfying this demand requires the master to perfectly recover the set of data fragments $\{P_l\mid l\in\mathcal{J}\}$, which comprises $K_cL$ independent symbols. Among the non-straggling workers in $\mathcal{R}$, those in $\mathcal{Z}$ do not host any of the data fragments from $\mathcal{J}$; therefore, their responses contain no information about these fragments. From \eqref{eq:comm-number-useful-workers}, at most $mK_c-1$ workers can provide relevant information to the master. Since each worker transmits exactly $L/m$ symbols, the master receives at most 
\begin{equation*}
(mK_c-1)\frac{L}{m} = K_cL - \frac{L}{m} < K_cL
\end{equation*}
relevant symbols. This renders the recovery of the $K_cL$ independent symbols information-theoretically impossible. Thus, we conclude that
\begin{equation*}
M\geq\frac{K}{N}(S+mK_c).
\end{equation*}
\endgroup

\section{Finding a Minimum-Weight Codeword in Optimal LRCs}
\label{app:lrc-minweight}
\begingroup
\setlength{\abovedisplayskip}{4pt}
\setlength{\belowdisplayskip}{4pt}
\setlength{\abovedisplayshortskip}{3pt}
\setlength{\belowdisplayshortskip}{3pt}

Consider an $[n,k,d_{\min}]$ code that possesses disjoint $(r,\delta)$ locality with generator matrix $\mathbf{G}\in\mathbb{F}^{k\times n}$. Assume the code is minimum-distance-optimal with respect to the bound in \cite[Theorem~2.1]{Kamath2014}, i.e., its minimum distance satisfies 
$$d_{\min} = n - k +1 - \left(\left\lceil \frac{k}{r} \right\rceil - 1\right)(\delta - 1).$$ 
Write
\[
k=r\left(\left\lceil\frac{k}{r}\right\rceil-1\right)+x,
\qquad 1\leq x\leq r.
\]
The following lemma establishes the linear independence property required to isolate a minimum-weight codeword.

\begin{lem}
\label{lem:lrc-independence}
Choose $\lceil k/r \rceil - 1$ distinct local groups and select $r$ columns of $\mathbf{G}$ from each. Then, select $x$ columns from one additional local group. The resulting $k$ columns are linearly independent.
\end{lem}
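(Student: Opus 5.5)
The plan is to argue by contradiction: a linear dependence among the chosen columns yields a nonzero codeword with too many zeros, which violates the assumed optimal minimum distance. Write $c\triangleq\lceil k/r\rceil$, so that $k=r(c-1)+x$. Let $\mathcal I\subseteq[0:n-1]$ be the index set of the $k$ chosen columns. It consists of $r$ positions from each of $c-1$ distinct local groups $\mathcal G_{1},\ldots,\mathcal G_{c-1}$, together with $x$ positions from one further group $\mathcal G_c$.

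First, suppose the columns $\mathbf G(:,\mathcal I)$ are linearly dependent. Since $\mathbf G(:,\mathcal I)$ is a $k\times k$ matrix, it is then singular. Hence there exists a nonzero message $\mathbf u\in\mathbb F^{1\times k}$ with $\mathbf u\mathbf G(:,\mathcal I)=\mathbf 0$. Because $\mathbf G$ is a generator matrix and has full row rank $k$, the codeword $\mathbf c=\mathbf u\mathbf G$ is nonzero, and it vanishes on every position in $\mathcal I$.

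Second, I would use the locality to propagate these zeros. By the definition of disjoint $(r,\delta)$-locality, each local group has size $r+\delta-1$, and any $\delta-1$ erasures inside it are recoverable from the remaining $r$ symbols. In other words, the restriction of the code to a group is determined by any $r$ of its coordinates. This is the linear-algebra fact the lemma rests on: the restriction map to any $r$ coordinates of a group is injective on the punctured local code. Since $\mathbf c$ restricted to $\mathcal G_\ell$ is a local codeword that vanishes on $r$ coordinates, and the all-zero local word also vanishes there, injectivity forces $\mathbf c$ to be zero on all of $\mathcal G_\ell$ for each $\ell\in[1:c-1]$. In $\mathcal G_c$ we only use the $x$ given zeros.

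Third, I count the support of $\mathbf c$:
\[
\operatorname{wt}(\mathbf c)\leq n-(c-1)(r+\delta-1)-x
= n-k-(c-1)(\delta-1)=d_{\min}-1,
\]
where the last equality uses the assumed optimality $d_{\min}=n-k+1-(c-1)(\delta-1)$. This contradicts the fact that $\mathbf c$ is a nonzero codeword, whose weight must be at least $d_{\min}$. Therefore the $k$ columns are linearly independent.

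The main point requiring care is the second step. One must justify that "any $\delta-1$ erasures are recoverable from the remaining $r$ symbols" implies that $r$ zeros force the whole group to be zero. I would phrase this via injectivity of the linear restriction map on the local code, since recoverability means the other coordinates are a function of those $r$ coordinates. The remaining steps are routine.
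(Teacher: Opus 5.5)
Your proof is correct and follows the paper's argument essentially step for step: a linear dependence gives a nonzero codeword vanishing on the $k$ chosen positions, locality forces it to vanish on each of the $\lceil k/r\rceil-1$ groups with $r$ chosen positions, and the resulting weight bound $d_{\min}-1$ contradicts optimality. The only differences are cosmetic. You justify the zero-forcing step via injectivity of the restriction to $r$ coordinates (erasure recoverability), whereas the paper uses the equivalent fact that a nonzero local restriction has weight at least $\delta$. You also state explicitly that $\mathbf c\neq\mathbf 0$ because $\mathbf G$ has full row rank, which the paper leaves implicit.
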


\begin{proof}
Let $\mathbf{G}_1$ be the $k\times k$ submatrix formed by these selected columns, and suppose $\operatorname{rank}(\mathbf{G}_1)<k$. This implies there exists a non-zero message vector $\mathbf{m} \in \mathbb{F}^{1 \times k}$ such that $\mathbf{m}\mathbf{G}_1 = \mathbf{0}$. Consequently, the corresponding codeword $\mathbf{c} = \mathbf{m}\mathbf{G}$ has zeros in all $k$ of these selected coordinates. 

Consider the $\lceil k/r \rceil - 1$ local groups where $r$ coordinates were selected. Since each group has size $r+\delta-1$, the presence of $r$ zeros leaves at most $\delta-1$ potentially non-zero coordinates. Because any non-zero restriction of a codeword to a local group must have a weight of at least $\delta$, the restriction to each of these groups must be entirely zero. Thus, $\mathbf{c}$ contains at least
\[
\left(\left\lceil\frac{k}{r}\right\rceil-1\right)
(r+\delta-1)+x
\]
zero coordinates. The weight of $\mathbf{c}$ is therefore at most
\[
n-k-
\left(\left\lceil\frac{k}{r}\right\rceil-1\right)
(\delta-1)
=d_{\min}-1,
\]
which contradicts the minimum-distance optimality of the LRC. Thus, $\mathbf{G}_1$ must be nonsingular.
\end{proof}

Recall that the $k$ columns of $\mathbf{G}_1$ consist of $r$ columns each from $\lceil k/r\rceil-1$ local groups, and $x$ columns from one additional local group. Let the $l$-th column of $\mathbf{G}_1$ correspond to one of the $x$ columns drawn from this additional group. Define $\mathbf{u}\in\mathbb{F}^{1\times k}$ as the unit vector with a $1$ at the $l$-th position and zeros elsewhere, and set the message vector to $\mathbf{m}=\mathbf{u}\mathbf{G}_1^{-1}$. Since $\mathbf{u}\neq\mathbf{0}$, the message vector $\mathbf{m}$ is non-zero and thus the resulting $\mathbf{c}=\mathbf{m}\mathbf{G}$ is non-zero as well. By construction, restricting $\mathbf{c}$ to the $k$ selected coordinates yields $\mathbf{m}\mathbf{G}_1 = \mathbf{u}$.

Consequently, $\mathbf{c}$ has zeros at the $r$ selected positions in each of the  $\lceil k/r\rceil-1$ local groups which are part of $\mathbf{G}_1$, and $x-1$ zeros among the selected positions of the additional group. As the presence of $r$ zeros in a local group forces all its remaining symbols to zero, $\mathbf{c}$ contains at least
\[
\left(\left\lceil\frac{k}{r}\right\rceil-1\right)(r+\delta-1)+x-1
\]
zero coordinates. Recalling that $k=r(\lceil k/r\rceil-1)+x$, the weight of this non-zero codeword is at most
\[
n-\left[\left(\left\lceil\frac{k}{r}\right\rceil-1\right)(r+\delta-1)+x\right]+1 = d_{\min}.
\]
Since the code has minimum distance $d_{\min}$, it must hold that $\operatorname{wt}(\mathbf{c})=d_{\min}$. Thus, we conclude that $\mathbf{c}$ is a minimum-weight codeword.

\end{document}